\definecolor{emphcol}{rgb}{0.86,0.125,0.07}
\newif\ifcomment
\definecolor{gris}{gray}{0.3}
\newcommand{\cqfd}[0]{~\hfill$\square$}
\newcommand{\tuple}[1]{{\langle #1 \rangle}}
\newcommand{\fun}[1]{\mathit{#1}}
\newcommand{\Nset}[0]{\bbbn}
\newcommand{\partset}[1]{\ensuremath{{\cal P}}(#1)} 
\newcommand{\eqdef}[0]{\; \triangleq \; } 
\newcommand{\abbrev}[1]{#1, \relax}
\newcommand{\ie}[0]{\abbrev{\textit{i.e.}}}
\newcommand{\eg}[0]{\abbrev{\textit{e.g.}}}
\newcommand{\restrict}[2]{\left.{#1}\right |_{#2}}
\newcommand{\diff}[0]{\setminus}
\newcommand{\lset}[0]{\Psi}
\newcommand{\LsetFun}[1]{\lset_{#1}}
\newcommand{\Lset}[2]{\LsetFun{#1}({#2})}
\newcommand{\orbit}[2]{\Omega_{#1}({#2})}
\newcommand{\orbitFun}[1]{\Omega_{#1}}
\newcommand{\evo}[0]{\rightharpoonup}
\newcommand{\nevo}[0]{\not \evo} 
\newcommand{\reg}[0]{\longrightarrow}
\newcommand{\acd}[0]{\ensuremath{\tuple{A,S,(\evo_a)_{a \in A}}}}
\newcommand{\eqr}[1]{\,\sim_{#1}\,}
\begin{document}

\title{Modular organisation of interaction networks based on
  asymptotic dynamics}

\author{Franck Delaplace\inst{1,}\thanks{Corresponding author:
    \url{franck.delaplace@ibisc.univ-evry.fr}.} \and Hanna Klaudel\inst{1} \and
  Tarek Melliti\inst{1} \and Sylvain Sen{\'e}\inst{1,2}}

\institute{Universit{\'e} d'{\'E}vry -- Val d'Essonne, IBISC, {\'E}A 4526, 91000
  {\'E}vry, France\and Institut rh{\^o}ne-alpin des syst{\`e}mes complexes,
  IXXI, 69007 Lyon, France}

\date{}

\maketitle

\begin{abstract}
  This paper investigates questions related to the modularity in discrete models
  of biological interaction networks. We develop a theoretical framework based
  on the analysis of their asymptotic dynamics. More precisely, we exhibit
  formal conditions under which agents of interaction networks can be grouped
  into modules. As a main result, we show that the usual decomposition in
  strongly connected components fulfils the conditions of being a modular
  organisation. Furthermore, we point out that our framework enables a finer
  analysis providing a decomposition in elementary modules.\\
  \emph{Keywords:} modularity, interaction networks, discrete dynamics,
  equilibria.
\end{abstract}

\section{Introduction}
\label{sec:introduction}

Understanding and exhibiting the relations between phenotypes and interactions
in biological networks, \ie the links between structures and
functions~\cite{Monod1970}, are among the most challenging problems at the
frontier of theoretical computer science and biology. Many phenotypes can be
associated to interactions of biological agents (\eg genes or proteins) working
together to guarantee some specific functions. This leads to group agents into
modules and to associate to them one or more biological functions. It follows
that biological networks can be seen, at a more abstract level, as modular
networks in which interacting elements are modules that carry the biological
information necessary to translate into precise functions.\bigskip

\noindent Modularity is present in various kinds of networks including metabolic
or signalling pathways and genetic or protein interaction networks. Modular
organisations are notably emphasised in
embryogenesis~\cite{Peter2009,McDougall2011} where modules of genes are
coordinated in the development process. Furthermore, methods related to modules
discovery in interaction networks are generally based on both the analysis of
the networks structures (a field close to graph theory) and the study of their
associated dynamics~\cite{Qi2006}. \emph{Structural analysis} identifies
sub-networks with specific topological properties motivated either by a
correspondence between topology and
functionality~\cite{Gagneur2004,Chaouiya2011} or by the existence of statistical
biases with respect to random networks~\cite{Rives2003}. Specific topologies
like cliques~\cite{Spirin2003}, or more generally strongly connected components
(SCCs) are commonly used to reveal modules by structural analysis. Particular
motifs~\cite{Milo2002,Alon2003} may also be interpreted as modules viewed as
basic components. They represent over-expressed biological sub-networks with
respect to random ones. \emph{Dynamical analysis} lays on the hypothesis that
expression profiles provide insights on the relationships between regulators,
modules being possibly revealed from correlations between the biological agents
expressions. For instance, using yeast gene expression data, the authors
of~\cite{Bar2003,Segal2003} inferred models of co-regulated genes and the
condition under which the regulation occurs. As a consequence, the discovery of
a modular organisation in biological interaction networks is closely related to
the influence of the expression of agents on the others and needs to investigate
their expression
dynamics~\cite{Thieffry1999,Han2008,Siebert2009,Delaplace2010,Demongeot2010}.\smallskip

\noindent Using a discrete model of biological interaction
networks~\cite{Thomas1973,Thomas1981}, we propose an approach that analyses the
conditions of modules formation and characterises the relations between the
global behaviour of a network and the local behaviours of its components.  The
chosen model of interaction networks is based on the assumption that phenotypes
are related, at the molecular level, to characteristic states of some biological
agents assimilated to equilibria\footnote{In systems biology, equilibria are
  usually associated to differentiated cellular types~\cite{Delbruck1949} and
  specific behavioural properties like lysis and lysogeny in the bacteriophage
  $\lambda$~\cite{Thieffry1995}.}. Thus, using dynamical analysis, we show
conditions under which interaction networks can be divided into modules so that
the composition of these modules behaviours matches the global behaviours of the
networks.\smallskip

\noindent The paper is structured as follows:
First, Section~\ref{sec:preliminaries} introduces the main definitions
and notations used throughout the paper. Then, Section~\ref{sec:composition}
presents the central notion of a modular organisation of a network along with
its structural and dynamical properties. Section~\ref{sec:separation} defines
elementary modular organisation and the conditions leading to obtain it. Some
concluding remarks and perspectives are provided in
Section~\ref{sec:discussion}.

\section{Preliminaries}
\label{sec:preliminaries}

First, we introduce basic notations. Let $\evo \; \subseteq S \times S$ be a
binary relation on $S$. Given $s, s'\in S$ and $S'\subseteq S$, we denote by $s
\evo s'$ the fact that $(s,s') \in\ \evo$, by $\left (s \, \evo \right) \eqdef
\{ s' \mid s \evo s'\}$ the image of $s$ by $\evo$, and by $\left (S' \evo
\right )$ its generalisation to $S'$. Similarly, we denote by $\left ( \evo \, s
\right )$ and $\left ( \evo S' \right )$ the corresponding preimages. The
composition of two binary relations will be denoted by $\evo \circ \evo'$, the
reflexive and transitive closure by $\evo^*$. 

\subsection{Interaction network and its associated dynamics}
\label{sec:global_dynamics}

Let us assume that a network $\eta$ is composed of a set $A = \{a_1, \ldots,
a_n\}$ of agents.  Each $a_i \in A$ has a \emph{local state}, denoted by
$s_{a_i}$, taking values in some nonempty finite set $S_{a_i}$. A \emph{state}
(or a configuration) of $\eta$ is defined as a vector $s\in S$ associating to
each $a_i\in A$ a value in $S_{a_i}$, where $S\eqdef S_{a_1}\times\ldots\times
S_{a_n}$ is the set of all possible states of the network. For any $X \subseteq
A$ and $s\in S$, we denote by $\restrict{s}{X}$ the restriction of $s$ to the
agents in $X$; this notation naturally extends to sets of states.\smallskip

\noindent An \emph{evolution} of $\eta$ is a relation $\evo \; \subseteq S
\times S$ where each $s \evo s'$ is a \emph{transition} meaning that $s$ evolves
to $s'$ by $\evo$. Thus, the \emph{global evolution} of $\eta$ can be
represented by a directed graph ${\cal G} = (S, \evo)$ called the \emph{state
  graph} (or the transition graph). In this work, we pay particular attention to
the notion of \emph{local evolution}, since each agent $a\in A$ has its own
evolution $\evo_a$. The collection of all these local evolutions results in the
\emph{asynchronous} view of the global evolution of $\eta$.
\begin{definition}
  \label{def:local_dyn}
  The \emph{asynchronous dynamics} (or \emph{dynamics} for short) of a network
  $\eta$ is defined as the triple $\acd$, where $A$ is a set of agents, $S$ is a
  set of states, and for each $a \in A$, $\evo_a \, \subseteq S \times S$ is a
  total or empty relation characterising the evolution of agent $a$ such that
  for any $s\evo_a s'$, either $s=s'$ or $s$ differs from $s'$ only on the
  $a$-th component.
\end{definition}
We are now in a position to introduce formally the \emph{interaction network} as
a family of functions\footnote{Actually, $\eta:A \to S \to \bigcup_{a \in A}
  S_a.$} $\eta = \{ \eta_a \}_{a \in A}$, such that each $\eta_a: S \to S_a$
defines the next state $\eta_a(s)$ with respect to the asynchronous evolution of
$a$ from $s$. Network $\eta$ allows to deduce a directed graph of interactions
$G \eqdef (A,\reg)$ such that $a_i \reg a_j$ if $a_i$ occurs in the definition
of $\eta_{a_j}$. Notice that $\eta$ implies $G$, but the opposite is in general
not true. When $\evo_a$ is empty for some $a$ (\ie the local state of $a$
remains invariant), then $a$ plays the role of an input, which means that no
other agents of $A$ influences it (\ie there are no arcs towards $a$ in
$G$).\smallskip

\noindent Given a set $S'\subseteq S$, we introduce the notions of orbits,
equilibria and attractors:\\[-7mm]
\begin{itemize}
\item An \emph{orbit} of $S'$ is the set of states $\orbit{}{S'} \eqdef (S'
  \evo^*)$ comprising $S'$ and all the states reachable from $S'$;
\item An \emph{equilibrium} $e\in S$ is a state endlessly reachable by $\evo$;
  $\Lset{}{S'}$ denotes the set of equilibria reachable from $S'$;
\item An \emph{attractor} is a set of equilibria $E \subseteq S$ such that
  $\forall e \in E \colon \Lset{}{\{e\}}=E$. In a state graph, attractors are
  sets of states belonging to terminal strongly connected components that can
  be of two kinds:\\
  -- a \emph{stable state} is a singleton $E\subseteq S$;\\
  -- a \emph{limit set} is an attractor $E$ such that $|E|>1$.
\end{itemize}

\noindent The restrictions of $\evo$ to $X\subseteq A$, denoted by $\evo_X =
\bigcup_{a \in X} \evo_a$, and of a state $s$, denoted by $s|_X$, lead to the
following definitions.
\begin{definition}
  \label{def:eqr}
  Two states $s_1, s_2 \in S$ are said to be \emph{X-equivalent} and are denoted
  by $s_1 \eqr{X} s_2$, for some $X\subseteq A$, if $\restrict{s_1}{X} =
  \restrict{s_2}{X}$, \ie if they are the same for all agents in $X$.
\end{definition}
This equivalence relation naturally extends to sets of agents, \ie $S_1 \eqr{X}
S_2 \eqdef \restrict{S_1}{X} = \restrict{S_2}{X}$.
\begin{definition}
  \label{def:equilibrium-orbit}
  We consider the following operators with the same signature $\partset{A} \to
  (\partset{S} \to \partset{S})$, where $X \subseteq A$ and $S' \subseteq 
  S$:\\[-7mm]
  \begin{itemize}
  \item the \emph{orbit operator} $\Omega$, defined as $\orbit{X}{S'} =
    (S'\evo_X^*)$,
  \item the \emph{equilibria operator} $\Psi$, defined as $\Lset{X}{S'} = \{ s
    \in \orbit{X}{S'} \mid \forall s' \in S: s \evo_X^* s' \implies s' \evo_X^*
    s \}.$
  \end{itemize}
\end{definition}
Proposition~\ref{prop:eq} below emphasises specific properties of operator
$\Psi$.
\begin{proposition} 
  \label{prop:eq}
  Let $X \subseteq A$ be a subset of agents. $\LsetFun{X}$ has the following
  properties, for all sets $S',S''$ subsets of states:
  \begin{enumerate}[a.]
  \item Idempotency: $\LsetFun{X} \circ \LsetFun{X}(S') = \Lset{X}{\Lset{X}{S'}}
    = \Lset{X}{S'}$; \label{idempotency}
  \item Upper-continuity: $\Lset{X}{S' \cup S''} = \Lset{X}{S'} \cup
    \Lset{X}{S''}$; \label{eq3}
  \item Monotony (order-preserving) : $S' \subseteq S'' \implies \Lset{X}{S'}
    \subseteq \Lset{X}{S''}$. \label{monotony}
  \end{enumerate}
\end{proposition}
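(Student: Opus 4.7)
The plan rests on the observation that the equilibrium condition ``$\forall s' \in S \colon s \evo_X^* s' \implies s' \evo_X^* s$'' appearing in the definition of $\Lset{X}{\cdot}$ depends only on $s$ and the relation $\evo_X$, and is insensitive to the source set passed to $\LsetFun{X}$. Consequently, all three items should reduce to analogous properties of the orbit operator $\orbitFun{X}$, together with a trivial preservation argument for that global equilibrium predicate. I would therefore prove the items in the order (c), (b), (a), so that monotony can feed directly into upper-continuity.

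For monotony (c), if $S' \subseteq S''$ then $\orbit{X}{S'} \subseteq \orbit{X}{S''}$ directly from the definition of $\evo_X^*$. Any $s \in \Lset{X}{S'}$ therefore lies in $\orbit{X}{S''}$ and keeps its equilibrium condition unchanged, so $s \in \Lset{X}{S''}$.

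For upper-continuity (b), I would first record the identity $\orbit{X}{S' \cup S''} = \orbit{X}{S'} \cup \orbit{X}{S''}$, which is immediate from the distributivity of reachability over unions of source sets. The inclusion $\Lset{X}{S'} \cup \Lset{X}{S''} \subseteq \Lset{X}{S' \cup S''}$ then follows at once from (c). For the reverse inclusion, any $s \in \Lset{X}{S' \cup S''}$ is reachable from $S'$ or from $S''$; in whichever case, the source-free equilibrium condition places $s$ in the corresponding $\Lset{X}{\cdot}$.

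For idempotency (a), the inclusion $\Lset{X}{S'} \subseteq \Lset{X}{\Lset{X}{S'}}$ is almost tautological, since every $s \in \Lset{X}{S'}$ lies in $\orbit{X}{\Lset{X}{S'}}$ through the reflexive part of $\evo_X^*$ while its equilibrium condition is intact. For the converse, if $s \in \Lset{X}{\Lset{X}{S'}}$ then $s \in \orbit{X}{\Lset{X}{S'}} \subseteq \orbit{X}{\orbit{X}{S'}} = \orbit{X}{S'}$ by transitivity of $\evo_X^*$, and again the equilibrium condition carries over, giving $s \in \Lset{X}{S'}$. There is no real obstacle here beyond bookkeeping; the single conceptual point is to notice that the equilibrium predicate is source-free, which collapses all three properties onto the corresponding elementary facts about $\orbitFun{X}$.
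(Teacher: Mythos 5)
Your proposal is correct and follows essentially the same route as the paper: both factor $\LsetFun{X}$ through the orbit operator and exploit the fact that the equilibrium predicate is independent of the source set, with upper-continuity coming from distributivity of $\evo_X^*$ over unions. The only cosmetic differences are that you prove monotony directly and feed it into upper-continuity (the paper deduces monotony from upper-continuity), and you handle idempotency by two inclusions where the paper rewrites $\orbit{X}{\Lset{X}{S'}}$ directly.
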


\begin{proof}
  Let $\fun{Eq}_{X}(s)$ be the predicate meaning that $s$ is an equilibrium for
  $\evo_{X}$.
  \begin{enumerate}[a.]
  \item By expanding $\Lset{X}{\Lset{X}{S'}}$, we have:
    \begin{equation*}
      \begin{split}
        \Lset{X}{\Lset{X}{S'}} & = \{s \in \orbit{X}{\Lset{X}{S'}} \mid
        \fun{Eq}_{X}(s)\}\\
        & = \{s \in \Lset{X}{S'} \mid \fun{Eq}_{X}(s)\}\\
        & = \Lset{X}{S'}\text{.}
      \end{split}
    \end{equation*}
  \item By definition, $\Lset{X}{S' \cup S''} = \{ s \in \orbit{X}{S' \cup S''}
    \mid \fun{Eq}_X(s) \}$, where $\orbit{X}{S' \cup S''} = (S' \cup
    S'')\hspace*{-3pt}\evo_X^*$. Since $\evo_X^*$ is upper-continuous on the
    lattice of state sets, we have:
    \begin{equation*}
      \begin{split}
        \Lset{X}{S' \cup S''} & = \{s \in \orbit{X}{S' } \cup \orbit{X}{S''}
        \mid \fun{Eq}_X(s)\}\\
        & = \{s \in \orbit{X}{S' } \mid \fun{Eq}(s) \} \cup \{ s \in
        \orbit{X}{S''} \mid \fun{Eq}_X(s)\}\\
        & = \Lset{X}{S'} \cup \Lset{X}{S''}\text{.}
      \end{split}
    \end{equation*}
  \item An upper-continuous function is monotonous.\cqfd
  \end{enumerate}
\end{proof}
\begin{figure}[t!]
  \begin{center}
    \begin{tabular}{ccm{56.3mm}}
      $\eta=\left\{
        \begin{array}{l}
          \eta_{a_1}(s) = s_{a_1}\\
          \eta_{a_2}(s) = s_{a_1} \lor s_{a_3}\\
          \eta_{a_3}(s) = \neg s_{a_2}\\
          \eta_{a_4}(s) = s_{a_3}\\
        \end{array}
      \right. $ & ~~~~~~ &
      \includegraphics[scale=1]{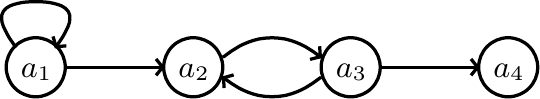}
    \end{tabular}\vspace*{4mm}
    \hspace*{3pt}\begin{tikzpicture}[xscale=2.9, yscale=1.5]
      \scriptsize 
      \GraphInit[vstyle=Normal] 
      \SetVertexNormal[Shape = rectangle, LineWidth=0pt]
      \node[rectangle, draw] at (0,0) (1000) {$\scriptstyle 1000$};
      \node[rectangle, draw] at (0,1) (1001) {$\scriptstyle 1001$};
      \node[rectangle, draw] at (1,0) (1010) {$\scriptstyle 1010$};
      \node[rectangle, draw] at (1,1) (1011) {$\scriptstyle 1011$};
      \node[rectangle, draw, fill=black!20] at (.7,.25) (1100) 
      {$\scriptstyle 1100$};
      \node[rectangle, draw] at (.7,1.25) (1101) {$\scriptstyle 1101$};
      \node[rectangle, draw] at (1.7,.25) (1110) {$\scriptstyle 1110$};
      \node[rectangle, draw] at (1.7,1.25) (1111) {$\scriptstyle 1111$};
      \node[rectangle, draw, color=white, fill=black!65] at (2.1,0) (0000) 
      {$\scriptstyle 0000$};
      \node[rectangle, draw, color=white, fill=black!65] at (2.1,1) (0001) 
      {$\scriptstyle 0001$};
      \node[rectangle, draw, color=white, fill=black!65] at (3.1,0) (0010) 
      {$\scriptstyle 0010$};
      \node[rectangle, draw, color=white, fill=black!65] at (3.1,1) (0011) 
      {$\scriptstyle 0011$};
      \node[rectangle, draw, color=white, fill=black!65] at (2.8,0.25) (0100)
      {$\scriptstyle 0100$};
      \node[rectangle, draw, color=white, fill=black!65] at (2.8,1.25) (0101) 
      {$\scriptstyle 0101$};
      \node[rectangle, draw, color=white, fill=black!65] at (3.8,0.25) (0110)
      {$\scriptstyle 0110$};
      \node[rectangle, draw, color=white, fill=black!65] at (3.8,1.25) (0111) 
      {$\scriptstyle 0111$};
      \tikzset{EdgeStyle/.style = {post}}
      \tikzset{LabelStyle/.style = {fill=white,sloped}}
      \tikzset{LabelStyle/.style = {color=white, text=black}}
      \Edge[label={$a_2$}](0100)(0000);
      \Edge[label={$a_2$}](0010)(0110);
      \Edge[label={$a_2$}](1000)(1100);
      \Edge[label={$a_2$}](1010)(1110);
      \Edge[label={$a_2$}](0101)(0001);
      \Edge[label={$a_2$}](0011)(0111);
      \Edge[label={$a_2$}](1001)(1101);
      \Edge[label={$a_2$}](1011)(1111);
      \Edge[label={$a_3$}](0110)(0100);
      \Edge[label={$a_3$}](0111)(0101);
      \Edge[label={$a_3$}](1110)(1100);
      \Edge[label={$a_3$}](1111)(1101);
      \Edge[label={$a_3$}](0000)(0010);
      \Edge[label={$a_3$}](0001)(0011);
      \Edge[label={$a_3$}](1000)(1010);
      \Edge[label={$a_3$}](1001)(1011);
      \Edge[label={$a_4$}](0001)(0000);
      \Edge[label={$a_4$}](0101)(0100);
      \Edge[label={$a_4$}](1001)(1000);
      \Edge[label={$a_4$}](1101)(1100);
      \Edge[label={$a_4$}](0010)(0011);
      \Edge[label={$a_4$}](0110)(0111);
      \Edge[label={$a_4$}](1010)(1011);
      \Edge[label={$a_4$}](1110)(1111);
    \end{tikzpicture}
    \caption{An interaction network and its state graph.}
    \label{fig:examplea}
  \end{center}
\end{figure} 


\noindent Example in Figure~\ref{fig:examplea} illustrates the dynamics of a
Boolean network\footnote{In the state graphs of interaction networks cited as
  examples, self loops are omitted. Stable states are depicted in grey while
  limit sets are in black.}, \ie in which for all $a \in A$, $S_a = \{0,1\}$. It
is defined by $\eta$, each $\eta_a$ being the local transition function of agent
$a$. Given a state $s$, the evolution $s \evo_a s'$ means that $s'$ is obtained
by applying $\eta_a$ to $s$. More formally:
\begin{equation*}
  s \evo_a s' \eqdef s'_a = \eta_a(s) \land \left( \forall a' \in A \diff \{a\}:
    s'_a = s_a \right)\text{.}
\end{equation*}
Figure~\ref{fig:examplea} (top) provides $\eta$ and its graphical
representation. Figure~\ref{fig:examplea} (bottom) depicts the state graph
${\cal G}$ of $\eta$. More precisely, ${\cal G}$ represents the dynamics of
$\eta$ for each state $s \in \{0,1\}^4$. Let us remark that:\\[-6mm]
\begin{itemize}
\item the \emph{orbit} of $\{1111\}$ is $\orbit{}{\{1111\}}=\{1111, 1101,
  1100\}$;
\item $1100$ is an \emph{equilibrium}, as well as $0101$ and $0010$ are;
\item the set of equilibria reachable from $0000$ is $\{0xyz\mid
  x,y,z\in \{0,1\}\}$;
\item two attractors exist: a stable state $\{1100\}$ and a limit set
  $\Lset{}{\{0000\}}$.
\end{itemize}

\subsection{Regulation}
\label{sec:regulation}

The regulation relation specifies a state-based dependence between two
agents. Agent $a_k$ is a \emph{regulator} of agent $a_{\ell}$, denoted by $a_k
\reg a_{\ell}$, if at least one modification of the state of $a_{\ell}$ requires
a modification of the state of $a_k$.
\begin{definition}
  \label{def:reg}
  \emph{$a_k$ regulates $a_{\ell}$}, denoted by $a_k \reg a_{\ell}$, if and only
  if there exist two states $s,s' \in S$ such that $(s \eqr{A \diff \{a_k\}} s')
  \land ((s \evo_{a_{\ell}}) \nsim_{a_{\ell}} (s' \evo_{a_{\ell}}))$. By
  extension, given $X_i, X_j \subseteq A$, $X_i \reg X_j$ if and only if
  $\exists a_k \in X_i, \exists a_{\ell} \in X_j: a_k \reg a_{\ell}$.
\end{definition}
In Figure~\ref{fig:examplea}, the sets of regulators of agents $a_1$, $a_2$,
$a_3$ and $a_4$ are respectively $\{a_1\}$, $\{a_1, a_3\}$, $\{a_2\}$ and
$\{a_3\}$. Notice also that there are the following relations on sets of agents:
$\{a_1,a_2\} \reg \{a_3,a_4\}, \{a_3\} \reg \{a_2,a_4\}$ and $\{a_1\} \reg
\{a_2,a_3\} \reg\{a_4\}$. Another example of a regulation graph is given in
Figure~\ref{fig:exampleb} (right). It shows that interaction $(a_1,a_2)$ in the
interaction network is actually not a regulation because no modification of
$a_1$ influences the state of $a_2$. All the other interactions are effective,
meaning that the underlying regulation graph contains all interactions from the
network but $(a_1,a_2)$.
\begin{figure}[t!]
  \begin{center}
    \hspace*{-3pt}\begin{minipage}{45mm}
      \begin{tabular}{c}
          $\eta=\left\{
            \begin{array}{l}
              \eta_{a_1}(s) = s_{a_1} \land s_{a_2}\\
              \eta_{a_2}(s) = s_{a_1} \lor 1\\
            \end{array}
          \right.$\\[5mm]
          \includegraphics[scale=1]{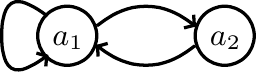}
      \end{tabular}
    \end{minipage}
    \begin{minipage}{35.3mm}
      \vspace*{-1.5mm}\begin{tikzpicture}[xscale=2.9, yscale=1.5]
        \scriptsize 
        \GraphInit[vstyle=Normal] 
        \SetVertexNormal[Shape = rectangle, LineWidth=0.5pt] 
        \node[rectangle,draw] at (0,0) (00) {$\scriptstyle 00$};
        \node[rectangle,draw,fill=black!20] at (0,0.75) (01) 
        {$\scriptstyle 01$};
        \node[rectangle,draw] at (0.75,0) (10) {$\scriptstyle 10$};
        \node[rectangle,draw,fill=black!20] at (0.75,0.75) (11) 
        {$\scriptstyle 11$};
        \tikzset{EdgeStyle/.style = {post}}
        \tikzset{LabelStyle/.style = {fill=white,sloped}}
        \tikzset{LabelStyle/.style = {color=white, text=black}}
        \Edge[label={$a_1$}](10)(00);
        \Edge[label={$a_2$}](00)(01);
        \Edge[label={$a_2$}](10)(11);
      \end{tikzpicture}
    \end{minipage}
    \begin{minipage}{27.8mm}
      \centerline{\includegraphics[scale=1]{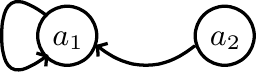}}
    \end{minipage}
  \end{center}
  \vspace*{-7mm}\caption{An interaction network, its state graph
    and the corresponding regulation graph.}
  \label{fig:exampleb}
\end{figure}

\section{Composition of equilibria }
\label{sec:composition}

In this section, we give rise to the relation between the equilibria of an
interaction network and that of its parts. This allows to consider a modular
view of the system in which each part is assimilated to a \emph{module}, \ie a
subset of agents with their dynamics. It means that modules, which influence
each other, reveal the underlying \emph{biological functions} materialised by
their equilibria.

\subsection{Modular organisation}
\label{sec:modular-organisation}

Our objective is to find a decomposition of $A$ into a set of modules, \ie a
partition\footnote{A \emph{partition} of a set $A$ is a set of nonempty disjoint
  subsets of $A$ which covers $A$.} of $A$, together with a composition operator
$\oslash$ of the equilibria of these modules, allowing to retrieve the global
equilibria of the network.  It is actually a challenge to find an adequate
operator $\oslash$, \ie such that a decomposition into a set of modules $\{X_1,
\ldots, X_m\}$ satisfies:
\begin{equation}
  \label{eq:compo_part}
  \LsetFun{X_1} \oslash \ldots \oslash \LsetFun{X_m} = 
  \LsetFun{\bigcup_{i=1}^m X_i}\text{.}
\end{equation}
One can easily see that, in general, taking $\oslash=\cup$ is not a
solution. For example, if we consider the interaction network in
Figure~\ref{fig:examplec} and a partition into two modules $\{a_1,a_2\}$ and
$\{a_3\}$, then the corresponding sets of equilibria are respectively
$\{110,111\}$ and $\{000,001,100,101,011,111\}$, while the set of global
equilibria is $\{111\}$, which is not the union of the previous ones.  However,
one may see that, for the same modules, the computation of the equilibria of
$\{a_3\}$ from the equilibria of $\{a_1,a_2\}$, gives the desired property
$\LsetFun{\{a_3\}} \circ \LsetFun{\{a_1,a_2\}} = \LsetFun{\{a_1,a_2,a_3\}}$,
whereas $\LsetFun{\{a_1,a_2\}} \circ \LsetFun{\{a_3\}} \neq
\LsetFun{\{a_1,a_2,a_3\}}$. This suggests that the order in which modules are
taken into account plays an important role in the definition of the composition
operator. Thus, we will focus on an \emph{ordered partition} $\pi = (X_1,
\ldots, X_m)$ of $A$, \ie a partition of $A$ provided with a strict total order
and represented by a sequence, called a \emph{modular organisation}, preserving
(\ref{eq:compo_part}). Furthermore, we would like to be able to ``fold''
contiguous modules in $\pi$ in order to deal with them as with a single
module\footnote{The folding of modules corresponds to the union of these
  modules.}.  As a consequence, we require a modular organisation to support
\emph{folding} and to be so the composition operator $\oslash$ is associative
according to the order in $\pi$.\smallskip
\begin{figure}[t!]
  \begin{center}
    \begin{minipage}{47mm} 
      \begin{tabular}{c}
        $\eta = \left\{
          \begin{array}{l}
            \eta_{a_1}(s) = \neg s_{a_1} \lor s_{a_2} \\
            \eta_{a_2}(s) = 1 \\
            \eta_{a_3}(s) = s_{a_2} \lor \neg s_{a_3} \\
          \end{array}
        \right. $\\[8mm]
        \includegraphics[scale=1]{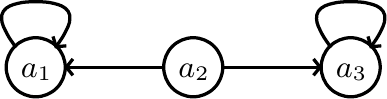}
      \end{tabular}
    \end{minipage}
    \hspace*{2mm}\begin{minipage}{70mm} 
      \vspace*{-1mm}\begin{tikzpicture}[xscale=1.6,yscale=.7]
        \scriptsize 
        \GraphInit[vstyle=Normal] 
        \SetVertexNormal[Shape = rectangle, LineWidth=0.5pt]
        \node[rectangle,draw] at (0,.75) (000) {$\scriptstyle 000$};
        \node[rectangle,draw] at (3.75,.75) (001) {$\scriptstyle 001$};
        \node[rectangle,draw] at (0,3.5) (010) {$\scriptstyle 010$};
        \node[rectangle,draw] at (3.75,3.5) (011) {$\scriptstyle 011$};
        \node[rectangle,draw] at (1.1,1.4) (100) {$\scriptstyle 100$};
        \node[rectangle,draw] at (1.1,2.8) (110) {$\scriptstyle 110$};
        \node[rectangle,draw] at (2.65,1.4) (101) {$\scriptstyle 101$};
        \node[rectangle,draw,fill = black!20] at (2.65,2.8) (111) 
        {$\scriptstyle 111$};
        \tikzset{EdgeStyle/.style = {post}}
        \tikzset{LabelStyle/.style = {fill=white,sloped}}
        \tikzset{LabelStyle/.style = {color=white, text=black}}
        \Edge[label={$a_1$}](000)(100)
        \Edge[label={$a_2$}](000)(010)
        \Edge[label={$a_3$}](000)(001)
        \Edge[label={$a_1$}](001)(101)
        \Edge[label={$a_2$}](001)(011)
        \Edge[label={$a_3$}](001)(000)
        \Edge[label={$a_1$}](010)(110)
        \Edge[label={$a_3$}](010)(011)
        \Edge[label={$a_1$}](011)(111)
        \Edge[label={$a_1$}](100)(000)
        \Edge[label={$a_2$}](100)(110)
        \Edge[label={$a_3$}](100)(101)
        \Edge[label={$a_1$}](101)(001)
        \Edge[label={$a_2$}](101)(111)
        \Edge[label={$a_3$}](101)(100)
        \Edge[label={$a_3$}](110)(111)
      \end{tikzpicture}
    \end{minipage}
  \end{center}
  \vspace*{-4mm}\caption{An interaction network and its state graph.}
  \label{fig:examplec}
\end{figure}

\noindent In order to form a modular organisation, the modules and their order
in $\pi$ should satisfy some conditions related to their dynamics.  Intuitively,
two disjoint sets of agents $X_i$ and $X_j$, ($i < j$), can be modules in $\pi$,
either if they do not regulate each other, or if $X_i$ regulates $X_j$. In both
cases, we can remark that the equilibria of $X_i$ should embed the asymptotic
evolution of $X_j$, which leads to encompass the equilibria of $X_j$ in the
equilibria of $X_i$. These conditions are expressed by the \emph{modularity
  relation} ($M$-relation).
\begin{definition}
  \label{def:m_relation}
  The \emph{$M$-relation}\footnote{Notice that the $M$-relation is reflexive
    but, in general, not transitive and that, when its definition holds for some
    $S'\subseteq S$, it holds for any subset of $S'$.} $\leadsto\
  \subseteq\ \partset{A} \times \partset{A}$ is defined as
  follows:
  \begin{equation*}
    X_i \leadsto X_j \eqdef \forall S' \subseteq S:
    (\LsetFun{X_i} \circ \Lset{X_i\cup X_j}{S'})\hspace*{-3pt}\evo_{X_j}\
    \subseteq\ (\LsetFun{X_i} \circ \Lset{X_i\cup X_j}{S'})\text{.}
    \end{equation*}
\end{definition}
\begin{proposition}
  \label{prop:leadsto}
  Let $S'$ be a subset of $S$. For all $X_i$, $X_j$ subsets of $A$, we have the
  following properties:
  \begin{enumerate}
  \item $X_i \leadsto X_j \iff \Lset{X_i \cup X_j}{S'} \subseteq (\LsetFun{X_i}
    \circ \orbit{X_i \cup X_j}{S'})$; \label{lorbit}
  \item $X_i \leadsto X_j \iff \Lset{X_i\cup X_j}{S'} = \LsetFun{X_i} \circ
    \Lset{X_i \cup X_j}{S'}$. \label{llset1}
  \end{enumerate}
\end{proposition}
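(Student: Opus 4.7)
The plan is to prove part~2 first and to derive part~1 from it. A key preliminary that I would establish upfront is the unconditional inclusion $\LsetFun{X_i}(\Lset{X_i \cup X_j}{S'}) \subseteq \Lset{X_i \cup X_j}{S'}$: if $s$ is an $X_i$-equilibrium reached by $\evo_{X_i}^*$ from some $s^* \in \Lset{X_i \cup X_j}{S'}$, then $s^* \evo_{X_i \cup X_j}^* s$, and the joint-equilibrium property of $s^*$ forces $s \evo_{X_i \cup X_j}^* s^*$, placing $s$ in the same terminal SCC and hence in $\Lset{X_i \cup X_j}{S'}$. This reduces part~2 to proving that $X_i \leadsto X_j$ is equivalent to the converse inclusion.

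For the $(\Leftarrow)$ direction of part~2, suppose that converse inclusion holds and write $F = \LsetFun{X_i}(\Lset{X_i \cup X_j}{S'}) = \Lset{X_i \cup X_j}{S'}$. Taking $s \evo_{X_j} s'$ with $s \in F$, the joint-equilibrium property of $s$ gives $s' \evo_{X_i \cup X_j}^* s$, placing $s'$ in the same terminal SCC and hence in $F$; this is exactly the defining inclusion of $X_i \leadsto X_j$. The direction $(\Rightarrow)$ is the main step: I would show that $F$ is closed under $\evo_{X_i \cup X_j}$. Closure under $\evo_{X_j}$ is the $M$-relation hypothesis itself, while closure under $\evo_{X_i}$ holds unconditionally, since $X_i$-equilibria constitute terminal SCCs of $\evo_{X_i}$, so any $\evo_{X_i}$-successor of an element of $F$ is another $X_i$-equilibrium that stays in $\orbit{X_i}{\Lset{X_i \cup X_j}{S'}}$. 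Then, for any $s \in \Lset{X_i \cup X_j}{S'}$, I would pick an $X_i$-equilibrium $s^\circ$ reachable from $s$ via $\evo_{X_i}^*$; the preliminary observation gives $s^\circ \in F$, the joint-equilibrium property of $s$ yields $s^\circ \evo_{X_i \cup X_j}^* s$, and closure of $F$ under $\evo_{X_i \cup X_j}$ propagates $s^\circ \in F$ to $s \in F$.

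Part~1 follows quickly from part~2. For $(\Rightarrow)$, I combine the identity of part~2 with the monotony of $\LsetFun{X_i}$ (Proposition~\ref{prop:eq}.c) applied to $\Lset{X_i \cup X_j}{S'} \subseteq \orbit{X_i \cup X_j}{S'}$. For $(\Leftarrow)$, the hypothesis forces every $s \in \Lset{X_i \cup X_j}{S'}$ to be an $X_i$-equilibrium; since such an $s$ is trivially in $\orbit{X_i}{\Lset{X_i \cup X_j}{S'}}$, this yields $\Lset{X_i \cup X_j}{S'} \subseteq \LsetFun{X_i}(\Lset{X_i \cup X_j}{S'})$, which combined with the unconditional reverse inclusion gives the equality needed to invoke part~2.

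The main obstacle is the closure-under-$\evo_{X_i}$ argument in part~2 $(\Rightarrow)$: it rests on the fact that $X_i$-equilibria are precisely the elements of terminal SCCs of $\evo_{X_i}$, so that a single $\evo_{X_i}$-step preserves both the equilibrium property and membership in $\orbit{X_i}{\Lset{X_i \cup X_j}{S'}}$. The remaining work is essentially bookkeeping on the interplay between $\evo_{X_i}^*$, $\evo_{X_j}$, and $\evo_{X_i \cup X_j}^*$, exploiting repeatedly that every joint equilibrium can reach back any state in its forward joint orbit.
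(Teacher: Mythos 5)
Your proof is correct, and it reaches the statement by a route organised differently from the paper's. Both arguments ultimately rest on the same two facts: the set $F = \LsetFun{X_i} \circ \Lset{X_i\cup X_j}{S'}$ is closed under $\evo_{X_i}$ unconditionally (its elements are $X_i$-equilibria, hence lie in terminal SCCs of $\evo_{X_i}$) and under $\evo_{X_j}$ precisely by the $M$-relation, while every joint equilibrium can be reached back from any state of its forward joint orbit. The paper, however, proves item~\ref{lorbit} first and by contradiction — it picks an $X_i$-equilibrium $s' \in \Lset{X_i}{\orbit{X_i\cup X_j}{\{s\}}}$ for a putative counterexample $s$ and shows $s' \nevo^*_{X_i\cup X_j} s$ — and then derives item~\ref{llset1} from item~\ref{lorbit} together with the identity $\Lset{X_i\cup X_j}{S'} = \orbit{X_i\cup X_j}{\Lset{X_i\cup X_j}{S'}}$. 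You reverse the decomposition: you prove item~\ref{llset1} first, directly, after isolating the unconditional inclusion $\LsetFun{X_i} \circ \Lset{X_i\cup X_j}{S'} \subseteq \Lset{X_i\cup X_j}{S'}$ (left implicit in the paper) and then propagating membership in $F$ along a path $s^\circ \evo^*_{X_i\cup X_j} s$ from a reachable $X_i$-equilibrium $s^\circ$ back to $s$; item~\ref{lorbit} then follows by monotony of $\LsetFun{X_i}$ (for $\Rightarrow$) and from the observation that the item-\ref{lorbit} inclusion forces joint equilibria to be $X_i$-equilibria (for $\Leftarrow$). Your version buys a constructive, contradiction-free argument that makes explicit the finiteness used for the existence of $s^\circ \in \Lset{X_i}{\{s\}}$ and a reusable inclusion; the paper's version reaches item~\ref{lorbit} (the orbit-based form) without the detour through item~\ref{llset1}. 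Two minor remarks: the fact $s^\circ \in F$ is immediate from the definition of $F$ (via monotony of $\LsetFun{X_i}$), not from your preliminary inclusion, whose actual role is to give $F \subseteq \Lset{X_i\cup X_j}{S'}$; and in your $(\Leftarrow)$ directions, as in the paper's, the right-hand conditions must be read as holding for all $S'$, since $\leadsto$ is defined by a universal quantification over $S'$.
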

\begin{proof}
  \begin{enumerate}
  \item ($\Rightarrow$) Let $S' \subseteq S$, $s \in \Lset{X_i\cup X_j}{S'}$
    and $s \notin \Lset{X_i}{\orbit{X_i \cup X_j}{S'}}$, and $s' \in
    \Lset{X_i}{\orbit{X_i \cup X_j}{\{s\}}}$. By definition of equilibrium, $s'
    \evo^*_{X_i\cup X_j} s$. Now, we have:
    \begin{itemize}
    \item $\forall s'' \in (s'\evo_{X_i}) : s'' \in \LsetFun{X_i} \circ
      \Lset{X_i\cup X_j}{\{s'\}}$, by definition of equilibria;
    \item $\forall s'' \in (s'\evo_{X_j}) : s'' \in \LsetFun{X_i} \circ
      \Lset{X_i\cup X_j}{\{s'\}}$, by definition of $\leadsto$.
  \end{itemize}
  As a consequence, $s' \nevo^*_{X_i\cup X_j} s$, which leads to a
  contradiction.\\[2mm]
  ($\Leftarrow$) Let $S' \subseteq S$, $s \in \Lset{X_i\cup X_j}{S'}$, and
  $\Lset{X_i\cup X_j}{S'} \subseteq \Lset{X_i}{\orbit{X_i \cup X_j}{S'}}$. Then,
  by hypothesis, we have:\\
  \centerline{ $\forall s'' \in (s\evo_{X_j}) : s''\in \Lset{X_i\cup X_j}{S'}
    \land s''\in \Lset{X_i}{\orbit{X_i \cup X_j}{S'}}\text{,}$ } which means
  that $s, s'' \in \LsetFun{X_i} \circ \Lset{X_i\cup X_j}{S'}$. Thus, $X_i
  \leadsto X_j$.\\[-2mm]
\item From Proposition~\ref{prop:leadsto}.\ref{lorbit} and since $\Lset{X_i
    \cup X_j}{S'} = \orbit{X_i \cup X_j}{\Lset{X_i\cup X_j}{S'}}$.\cqfd
\end{enumerate} 
\end{proof}
In this context, a modular organisation can be defined as:
\begin{definition}
  \label{def:modular-organisation}
  A \emph{modular organisation} is defined as an ordered partition $(X_1,
  \ldots, X_m)$ of $A$ such that, for all $1 \leq i \leq m$:
  $(\bigcup_{j=1}^{i-1} X_j) \leadsto X_i$.
\end{definition}
From Definition~\ref{def:modular-organisation},
Proposition~\ref{prop:modular-organisation} below states that being a modular
organisation is preserved by any folding of its contiguous parts.
\begin{proposition}
  \label{prop:modular-organisation}
  Let $\pi = (X_1, \ldots, X_m)$ be a modular organisation. For all $1 \leq i
  \leq j \leq m$, $(X_1, \ldots, X_{i-1}, \bigcup_{k=i}^j X_k, X_{j+1}, \ldots,
  X_m)$ is a modular organisation.
\end{proposition}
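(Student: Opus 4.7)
My plan is to verify, position by position, that the folded sequence $\pi' = (X_1, \ldots, X_{i-1}, Y, X_{j+1}, \ldots, X_m)$ with $Y = \bigcup_{k=i}^{j} X_k$ satisfies Definition~\ref{def:modular-organisation}. For the positions $X_\ell$ with $\ell < i$ or $\ell > j$, the set of agents preceding them in $\pi'$ is exactly $\bigcup_{k=1}^{\ell-1} X_k$ (folding merely regroups the prefix when $\ell > j$), so the modularity condition transfers directly from the hypothesis on $\pi$. The substantive obligation is to establish the condition at the new folded position, namely $(\bigcup_{k=1}^{i-1} X_k) \leadsto Y$.

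Writing $P_\ell = \bigcup_{k=1}^{\ell-1} X_k$ and noting that $P_i \cup Y = P_{j+1}$, I would switch to the fixed-point form provided by Proposition~\ref{prop:leadsto}.\ref{llset1}: each hypothesis $P_\ell \leadsto X_\ell$ becomes the identity $\Lset{P_{\ell+1}}{S'} = \LsetFun{P_\ell} \circ \Lset{P_{\ell+1}}{S'}$, and the goal becomes $\Lset{P_{j+1}}{S'} = \LsetFun{P_i} \circ \Lset{P_{j+1}}{S'}$, to be shown for every $S' \subseteq S$.

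This identity is then proved by induction on the block length $k = j - i \geq 0$. The base case $k = 0$ is exactly the hypothesis at $\ell = i$. For the inductive step, I would start from the hypothesis at $\ell = j$, which rewrites $\Lset{P_{j+1}}{S'}$ as $\LsetFun{P_j}(\Lset{P_{j+1}}{S'})$; then apply the inductive claim (for the shorter block $X_i, \ldots, X_{j-1}$) with starting set $S'' = \Lset{P_{j+1}}{S'}$ to pull $\LsetFun{P_i}$ outside $\LsetFun{P_j}$; and finally re-apply the hypothesis at $\ell = j$ to collapse $\LsetFun{P_j}(\Lset{P_{j+1}}{S'})$ back to $\Lset{P_{j+1}}{S'}$. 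Translating this identity back via Proposition~\ref{prop:leadsto}.\ref{llset1} yields $P_i \leadsto Y$ and concludes the proof. The main obstacle is the nested bookkeeping of $\Psi$-operators in the inductive step; rewriting $\leadsto$ as a fixed-point equation is precisely what makes the telescoping transparent.
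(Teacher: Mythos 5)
Your proof is correct and follows essentially the same route as the paper: both reduce $P_i \leadsto \bigcup_{k=i}^{j} X_k$ to the fixed-point identities $\LsetFun{P_\ell}\circ\LsetFun{P_{\ell+1}}=\LsetFun{P_{\ell+1}}$ via Proposition~\ref{prop:leadsto}.\ref{llset1} and telescope them, your explicit induction on the block length being just a cleaner rendering of the paper's ``iteratively'' step (plus the harmless explicit check that the other positions are unaffected).
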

\begin{proof}
  Let $\pi = (X_1, \ldots, X_{i-1}, X_i, X_{i+1}, \ldots, X_m)$ be a modular
  organisation. Let $X = \bigcup_{k=1}^{i-1} X_k$.  We want to show that $(X_1,
  \ldots, X_{i-1}, X_i \cup X_{i+1}, \ldots, X_m)$ is a modular organisation. By
  definition~\ref{def:modular-organisation}, we have:
  \begin{equation}
    \label{eq01}
    (X \cup X_i) \leadsto X_{i+1}
  \end{equation}
  and:
    \begin{equation}
    \label{eq02}
    X \leadsto X_i\text{.}
  \end{equation}
  We want to show that: $\text{(\ref{eq01})} \land \text{(\ref{eq02})} \implies
  X \leadsto (X_i \cup X_{i+1})$. First, by
  Proposition~\ref{prop:leadsto}.\ref{llset1}, we can write:
  \begin{equation}
    \label{eq03}
    \LsetFun{X \cup X_i} \circ \LsetFun{X \cup X_i \cup X_{i+1}} =
    \LsetFun{X \cup X_i \cup X_{i+1}} \text{ by (\ref{eq01}),}
  \end{equation}
  \begin{equation}
    \label{eq04}
    \LsetFun{X} \circ \LsetFun{X \cup X_i} = \LsetFun{X \cup X_i} 
    \text{ by (\ref{eq02}).}
  \end{equation}
  Thus:
  \begin{equation*}
    \begin{split}
      \LsetFun{X} \circ \LsetFun{X \cup X_i \cup X_{i+1}} & = \LsetFun{X} \circ
      (\LsetFun{X \cup X_i} \circ
      \LsetFun{X \cup X_i \cup X_{i+1}}) \text{ by (\ref{eq03})}\\
      & = (\LsetFun{X} \circ \LsetFun{X \cup X_i}) \circ
      \LsetFun{X \cup X_i \cup X_{i+1}}\\
      & = \LsetFun{X \cup X_i} \circ \LsetFun{X \cup X_i \cup X_{i+1}} 
      \text{ by (\ref{eq04})}\\
      & = \LsetFun{X \cup X_i \cup X_{i+1}} \text{ by (\ref{eq03}),}
    \end{split}
  \end{equation*}
  which is the expected result. From
  Proposition~\ref{prop:leadsto}.\ref{llset1}, we can deduce that: $X \leadsto
  (X_i \cup X_{i+1})$. Iteratively, we show that $X \leadsto \bigcup_{k=i}^j
  X_k$. As a result, $(X_1, \ldots, X_{i-1}, \bigcup_{k=i}^j X_k, X_{j+1}, \ldots,
  X_m)$ is a modular organisation.\cqfd
\end{proof}
In the literature~\cite{Gagneur2004,Milo2002,Alon2003}, modules are frequently
related assimilated to SCCs of interaction networks. Although these works focus
on structural arguments only, it turns out that they are compatible with
Definition~\ref{def:modular-organisation}. Indeed, any topological ordering of
SCCs is actually a modular organisation. For instance, $(\{a_1\}, \{a_2,a_3\},
\{a_4\})$ is a modular organisation of the interaction network presented in
Figure~\ref{fig:examplea}. Similarly, any other structural or dynamical property
could be helpful in the research of modular organisations. In what follows, we
present an approach addressing formally this aspect. As a result, we show that,
in particular, the structural decomposition in SCCs makes sense and may be
improved by a deeper analysis leading to the decomposition of SCCs in elementary
modules (see Section~\ref{sec:separation}).

\subsection{Regulation and modularity relation}
\label{sec:regulation-Mrelation}

In this section, we focus on the relation between the notions of regulation and
$M$-relation. More precisely, and unexpectedly, we will see in
Lemma~\ref{lem:nonregul-is-m} that this relation refers more directly to the
absence of regulation than to the regulation itself.
\begin{lemma}
  \label{lem:nonregul-is-m}
  For any $X_i$, $X_j$ subsets of $A$: $\neg(X_j \reg X_i) \implies X_i \leadsto
  X_j$.
\end{lemma}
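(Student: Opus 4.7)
The plan is to unfold the definition of $\leadsto$ and to show that, for every $S' \subseteq S$ and every $s \in \LsetFun{X_i} \circ \Lset{X_i \cup X_j}{S'}$, any $s''$ with $s \evo_{X_j} s''$ still belongs to $\LsetFun{X_i} \circ \Lset{X_i \cup X_j}{S'}$.

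A preliminary closure remark will be used throughout: $\Lset{X_i \cup X_j}{S'}$ is stable under $\evo_{X_i \cup X_j}$, because a state reached from an equilibrium is itself an equilibrium of the same orbit (a one-line check from the definition of $\LsetFun{}$). Applying this first to the $\evo_{X_i}^*$-path witnessing $s \in \orbit{X_i}{\Lset{X_i \cup X_j}{S'}}$ places $s$ in $\Lset{X_i \cup X_j}{S'}$, and then applying it to the transition $s \evo_{X_j} s''$ places $s''$ there as well, hence a fortiori in $\orbit{X_i}{\Lset{X_i \cup X_j}{S'}}$.

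The heart of the argument is to check that $s''$ is an $X_i$-equilibrium. If $s = s''$ this is given by hypothesis on $s$; otherwise $s$ and $s''$ differ only on some $a_k \in X_j$, and $\neg(X_j \reg X_i)$ yields $\neg(a_k \reg a_\ell)$ for every $a_\ell \in X_i$, so by Definition~\ref{def:reg} the images $(u \evo_{a_\ell})$ and $(u' \evo_{a_\ell})$ agree on their $a_\ell$-component whenever $u$ and $u'$ agree outside $\{a_k\}$. From this I would prove, by induction on path length, a symmetric lifting lemma: any $X_i$-path $s'' = u_0 \evo_{a_{\ell_1}} u_1 \cdots \evo_{a_{\ell_n}} u_n$ is mirrored by an $X_i$-path $s = v_0 \evo_{a_{\ell_1}} v_1 \cdots \evo_{a_{\ell_n}} v_n$ in which each $u_i$ and $v_i$ differ only on $a_k$, and vice versa.

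To conclude, given $s'' \evo_{X_i}^* t$, I lift it to $s \evo_{X_i}^* v_n$ with $v_n$ and $t$ matching off $\{a_k\}$; the $X_i$-equilibrium property of $s$ then supplies $v_n \evo_{X_i}^* s$; lifting this return path from $t$ yields $t \evo_{X_i}^* s''$ (the endpoint equals $s''$ since it differs from $s$ only on $a_k$ and inherits the $a_k$-value of $t$, which is that of $s''$). Thus $s''$ is an $X_i$-equilibrium in $\orbit{X_i}{\Lset{X_i \cup X_j}{S'}}$, which is exactly the required membership. The delicate step is the induction in the lifting lemma: one must check at every step that the pair $(u_i, v_i)$ keeps differing only on $a_k$ so that the non-regulation identity remains applicable; this uses the standard assumption $X_i \cap X_j = \emptyset$, the degenerate overlap being handled by observing that $\neg(X_j \reg X_i)$ also forbids $a \reg a$ for any $a \in X_i \cap X_j$.
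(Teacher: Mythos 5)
Your proposal is correct and takes essentially the same route as the paper: the absence of regulation of $X_i$ by $X_j$ makes the $\evo_{X_i}$-dynamics (projected on $X_i$) insensitive to the coordinate changed by an $\evo_{X_j}$-step, so such a step from a state of $\LsetFun{X_i} \circ \Lset{X_i\cup X_j}{S'}$ again yields an $X_i$-equilibrium in that set. Your closure remark and path-lifting induction simply make rigorous what the paper's proof asserts informally (``this property is obviously preserved at equilibria''), which is a welcome level of detail rather than a different argument.
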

\begin{proof}
  By Definition~\ref{def:reg}, for any $X_i, X_j \subseteq A$ and for any $s, s'
  \in S$, we have $\neg (X_j \reg X_i) \land (s \eqr {A \diff X_j } s') \implies
  (s \evo_{X_i}) \eqr {X_i} (s' \evo_{X_i})$. This property is obviously
  preserved at equilibria. Indeed, for any $s, s' \in S$, we have $\neg (X_j
  \reg X_i) \land (s \eqr {A \diff X_j} s') \implies \Lset{X_i}{s} \eqr {X_i}
  \Lset{X_i}{s'}$. Thus, the restrictions $\Lset{X_i}{s}$ and $\Lset{X_i}{s'}$
  to $X_i$ are identical. Then, the evolution by $X_j$ from the equilibria of
  $X_i$ remains in the equilibria of $X_i$. Hence, we get that $\neg(X_j \reg
  X_i) \implies X_i \leadsto X_j.$\cqfd
\end{proof}
Also, as a remarkable fact, according to
Definition~\ref{def:modular-organisation}, Theorem~\ref{the:scc-folfing}
provides a connection between structural properties of a regulation graph and
the corresponding modular organisations (possibly reduced to a single module).
\begin{theorem}
  \label{the:scc-folfing}
  Any topological ordering of the SCC quotient graph of a regulation graph is a
  modular organisation.
\end{theorem}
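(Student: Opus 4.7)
The plan is to reduce the theorem to a direct application of Lemma~\ref{lem:nonregul-is-m}, which already encapsulates the only dynamical content needed: absence of regulation in one direction is enough to guarantee the $M$-relation in the other. Once this is in hand, the rest is a purely combinatorial observation about topological orderings of DAGs.

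First I would fix a topological ordering $(C_1,\ldots,C_m)$ of the SCC quotient graph of the regulation graph $G=(A,\reg)$. By definition of a topological ordering of the quotient, whenever there is an arc from $C_p$ to $C_q$ with $p\neq q$, one has $p<q$; contrapositively, for every pair $k<i$ there is no arc from $C_i$ to $C_k$ in the quotient, \ie no agent of $C_i$ regulates any agent of $C_k$ in $G$.

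Next I would lift this agent-level statement to the block level using the extension of $\reg$ to sets of agents from Definition~\ref{def:reg}: since $X_i \reg X_j$ is defined existentially, its negation distributes over unions, and the observations above combine into
\begin{equation*}
  \neg\Bigl(C_i \reg \bigcup_{k=1}^{i-1} C_k\Bigr)\text{.}
\end{equation*}
Applying Lemma~\ref{lem:nonregul-is-m} with the roles $X_j := C_i$ and $X_i := \bigcup_{k=1}^{i-1} C_k$ yields $\bigcup_{k=1}^{i-1} C_k \leadsto C_i$. Doing this for every $1 \leq i \leq m$ gives exactly the list of conditions required by Definition~\ref{def:modular-organisation}, so $(C_1,\ldots,C_m)$ is a modular organisation.

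The main obstacle here is essentially pre-paid by Lemma~\ref{lem:nonregul-is-m}: all the dynamical reasoning about equilibria is hidden in that lemma. The only minor care needed in the present argument is the bookkeeping step that turns ``no arc between two SCCs in the quotient'' into the block-level statement $\neg(C_i \reg \bigcup_{k<i} C_k)$, and this is immediate from the existential definition of regulation between sets. No induction on folding or appeal to Proposition~\ref{prop:modular-organisation} is required, since Definition~\ref{def:modular-organisation} is already verified in its original, unfolded form.
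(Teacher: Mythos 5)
Your proposal is correct and follows essentially the same route as the paper: both rest entirely on Lemma~\ref{lem:nonregul-is-m}, combined with the observation that acyclicity of the SCC quotient graph and the topological ordering rule out any regulation from $C_i$ back to $\bigcup_{k<i} C_k$. If anything, your explicit remark that the negation of the existentially defined set-level regulation distributes over unions makes the union-level step (which the paper phrases more loosely as ``folding preserves the absence of regulation'') cleaner, so Definition~\ref{def:modular-organisation} is verified directly.
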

\begin{proof}
  Observe that, in the SCC quotient graph $G$ of a regulation graph, $X_i \reg
  X_j$ always implies that $\neg (X_j \reg X_i)$, because of the acyclicity of
  $G$. Thus, folding contiguous modules with respect to any topological ordering
  preserves the absence of regulation. As a consequence, if $(X_1, \ldots, X_m)$
  is a topological ordering of $G$, for all $i, j \in \Nset$ such that $1 \leq i
  \leq j \leq m$, we have $\neg(X_j \reg X_i)$, and by
  Lemma~\ref{lem:nonregul-is-m}, $X_i \leadsto X_j$.\cqfd
\end{proof}

\subsection{Composition operator}
\label{sec:composition-operator}

\begin{figure}[t!]
  \centerline{
    \begin{tabular}{ccc} 
      \includegraphics[scale=0.79]{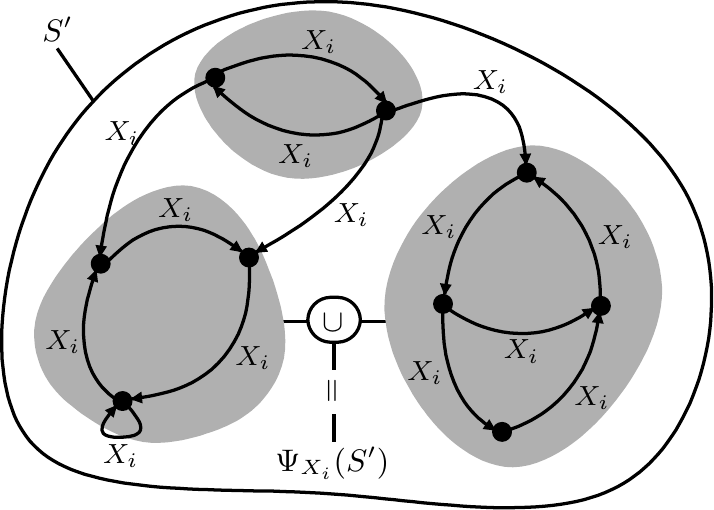} & 
      ~~~~~ & 
      \includegraphics[scale=0.79]{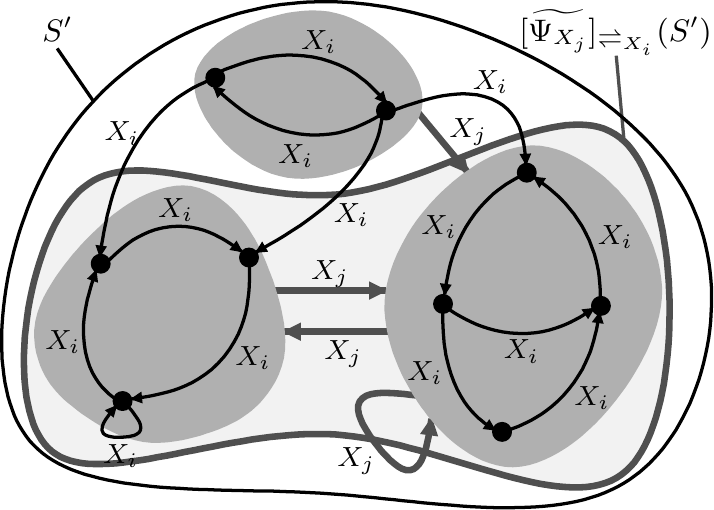}\\
      $(a)$ & & $(b)$
    \end{tabular}
    }
  \caption{Successive steps leading to the definition of the composition
    operator $\oslash$.}
  \label{fig:composition_operator}
\end{figure}

In this section, we present the successive steps leading to the definition of
the composition operator $\oslash$. From (\ref{eq:compo_part}), $\oslash$ is a
binary operator that applies on the equilibria of parts $X_i$ and $X_j$ of
$\pi$, with $i<j$. Thus, its definition is based on the attractors of
$\evo_{X_i}$ which correspond to terminal nodes of the SCC quotient graph of
$\evo_{X_i}$, namely $\mathcal{G}/{}_{\rightleftharpoons_{X_i}}$, where
$\rightleftharpoons_{X_i}$ is the equivalence relation on states defined as $s
\rightleftharpoons_{X_i} s' \eqdef (s \evo^*_{X_i} s') \land (s' \evo^*_{X_i}
s)$. For any $S' \subseteq S$, an attractor of $\evo_{X_i}$ complies to:
$[s]_{\rightleftharpoons_{X_i}} \subseteq \LsetFun{X_i}(S')$ (see
Figure~\ref{fig:composition_operator}.a). Moreover, for all $S'
\subseteq S$, we denote by:\\[-7mm]
\begin{itemize}
\item $[S']_{\rightleftharpoons_{X_i}} = \{[s]_{\rightleftharpoons_{X_i}} | s
  \in S'\}$ the sets of equivalence classes of $\rightleftharpoons_{X_i}$ in
  $S'$;
\item $[s]_{\rightleftharpoons_{X_i}} \; [\evo_{X_j}]_{\rightleftharpoons_{X_i}}
  \; [s']_{\rightleftharpoons_{X_i}} \eqdef \exists s \in
  [s]_{\rightleftharpoons_{X_i}}, \exists s' \in
  [s']_{\rightleftharpoons_{X_i}}: s \evo_{X_j} s'$ the evolution by agents of
  $X_j$ on these equivalence classes.\\[-7mm]
\end{itemize}
We define an operator $[\widetilde{\LsetFun{X_j}}]_{\rightleftharpoons_{X_i}}$
that computes the set of equilibria of $[\evo_{X_j}]_{\rightleftharpoons_{X_i}}$
in $\mathcal{G}/{}_{\rightleftharpoons_{X_i}}$ (see
Figure~\ref{fig:composition_operator}.b) as follows:
\begin{multline*}
  [\widetilde{\LsetFun{X_j}}]_{\rightleftharpoons_{X_i}}(S')\ \eqdef\ \{ e \in
  [S']_{\rightleftharpoons_{X_i}} \mid ((e
  [\evo_{X_j}]_{\rightleftharpoons_{X_i}}^*) \subseteq
  [S']_{\rightleftharpoons_{X_i}})\ \land\\ (\forall e' \in
  [S]_{\rightleftharpoons_{X_i}}: e [\evo_{X_j}]_{\rightleftharpoons_{X_i}}^* e'
  \implies e' [\evo_{X_j}]_{\rightleftharpoons_{X_i}}^* e)\}.
\end{multline*}
Operator $\oslash$ is thus defined as:
\begin{equation}
  \label{eq:op}
  \LsetFun{X_i} \oslash \LsetFun{X_j} = \textsf{Flatten} \circ 
  [\widetilde{\LsetFun{X_j}}]_{\rightleftharpoons_{X_i}} \circ \LsetFun{X_i}\text{,}
\end{equation}
where, for any set $E\subseteq \partset{S}$, $\textsf{Flatten}(E) = \bigcup_{e
  \in E} e$. Thus, if applied to a set of attractors, \textsf{Flatten} gives the
underlying set of equilibria\footnote{For example,
  $\textsf{Flatten}(\{\{00,01\},\{10\}\}) = \{00,01,10\}$.}. As a result, it can
easily be seen that $\oslash$ does compute the set of states belonging to the
attractors of $X_i$ which are also the equilibria of $\evo_{X_j}$.\smallskip

\noindent Lemma~\ref{lem:ccmod} below shows that the global equilibria of $X_i
\cup X_j$ are obtained using the composition $\LsetFun{X_i} \oslash
\LsetFun{X_j}$.
\begin{lemma} 
  \label{lem:ccmod}
  For all $X_i$, $X_j$ disjoint subsets of $A$:
  \begin{equation*}
    X_i \leadsto X_j \implies \forall S' \subseteq S:
    \LsetFun{X_i \cup X_j} (S') = (\LsetFun{X_i} \oslash \LsetFun{X_j}) \circ
    \orbitFun{X_i \cup X_j}(S')\text{.}
  \end{equation*}
\end{lemma}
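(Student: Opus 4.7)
The plan is to prove the two inclusions by direct unfolding of the definitions, after fixing notations. Set $O = \orbit{X_i \cup X_j}{S'}$, $L = \LsetFun{X_i}(O)$ and $E = \LsetFun{X_i \cup X_j}(S')$; the goal is then to show $E = \textsf{Flatten}([\widetilde{\LsetFun{X_j}}]_{\rightleftharpoons_{X_i}}(L))$. The guiding observation is that a path $e \evo^*_{X_i \cup X_j} s'$ factorises canonically into an alternation of $\evo_{X_i}^*$-segments that stay inside single $\rightleftharpoons_{X_i}$-classes and individual $\evo_{X_j}$-steps that induce the quotient transitions $[\evo_{X_j}]_{\rightleftharpoons_{X_i}}$ in $\mathcal{G}/{}_{\rightleftharpoons_{X_i}}$; conversely, any path in the quotient graph can be lifted to a state-level $\evo^*_{X_i \cup X_j}$-path because each class is, by construction, a strongly connected component of $\evo_{X_i}$.

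For the direct inclusion $E \subseteq \textsf{Flatten}([\widetilde{\LsetFun{X_j}}]_{\rightleftharpoons_{X_i}}(L))$, take $e \in E$. Proposition~\ref{prop:leadsto}.\ref{llset1} applied to $X_i \leadsto X_j$ yields $E = \LsetFun{X_i}(E)$, and by monotony of $\LsetFun{X_i}$ together with $E \subseteq O$, we get $e \in L$, hence $[e]_{\rightleftharpoons_{X_i}} \in [L]_{\rightleftharpoons_{X_i}}$. To verify the two defining conditions of $[\widetilde{\LsetFun{X_j}}]_{\rightleftharpoons_{X_i}}(L)$, assume $[e] [\evo_{X_j}]_{\rightleftharpoons_{X_i}}^* [s']$ and lift this quotient path: there is some $t \in [s']$ with $e \evo^*_{X_i \cup X_j} t$. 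Since $e$ is a global equilibrium we get $t \evo^*_{X_i \cup X_j} e$, so $t \in E \subseteq L$ and therefore $[s'] \in [L]_{\rightleftharpoons_{X_i}}$; projecting the return path to the quotient also gives $[s'] [\evo_{X_j}]_{\rightleftharpoons_{X_i}}^* [e]$. Thus $[e]$ satisfies both defining conditions and $e \in \textsf{Flatten}([\widetilde{\LsetFun{X_j}}]_{\rightleftharpoons_{X_i}}(L))$.

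For the reverse inclusion, pick $e \in [c]$ for some $[c] \in [\widetilde{\LsetFun{X_j}}]_{\rightleftharpoons_{X_i}}(L)$. Then $e \in L \subseteq O$, so $e$ is reachable from $S'$. Given any $s'$ with $e \evo^*_{X_i \cup X_j} s'$, projecting this path to the quotient yields $[e] [\evo_{X_j}]_{\rightleftharpoons_{X_i}}^* [s']$, and the equilibrium condition on $[c]$ produces a return path $[s'] [\evo_{X_j}]_{\rightleftharpoons_{X_i}}^* [c]$. Lifting this latter path step by step, interleaving $\evo_{X_j}$-transitions across classes with $\evo_{X_i}^*$-transitions inside each class, and finally using the $\evo_{X_i}$-SCC structure of $[c]$ to reach $e$ itself, provides a witness of $s' \evo^*_{X_i \cup X_j} e$. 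Hence $e \in E$.

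The main delicate point is the lifting argument, which I would state once and reuse in both directions: every step in the quotient graph $[\evo_{X_j}]_{\rightleftharpoons_{X_i}}$ corresponds by definition to at least one concrete $\evo_{X_j}$-transition between two states of the source and target classes, and the $\rightleftharpoons_{X_i}$-equivalence internal to each class lets us freely connect the endpoints by $\evo_{X_i}^*$. Once this correspondence is fixed, both inclusions reduce to the equilibrium bookkeeping above, and only one invocation of Proposition~\ref{prop:leadsto}.\ref{llset1} is needed to anchor the $X_i$-equilibrium status of the global equilibria.
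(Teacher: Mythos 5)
Your skeleton is the same as the paper's (both inclusions via a correspondence between concrete $\evo_{X_i\cup X_j}$-paths and quotient $[\evo_{X_j}]_{\rightleftharpoons_{X_i}}$-paths, anchored by Proposition~\ref{prop:leadsto}.\ref{llset1}), and the lifting half of that correspondence is indeed unproblematic, for the reason you give. The gap is in the other half, the projection, which you present as a ``canonical factorisation'': it is false in general that a concrete path splits into $\evo_{X_i}^*$-segments confined to single $\rightleftharpoons_{X_i}$-classes plus $\evo_{X_j}$-steps. A step $u \evo_{X_i} v$ exits the class of $u$ whenever $v$ cannot return to $u$ under $\evo_{X_i}$, i.e.\ whenever $u$ is not an equilibrium of $\evo_{X_i}$; the projected move is then neither trivial nor an $[\evo_{X_j}]_{\rightleftharpoons_{X_i}}$-transition, and ``projecting the path'' does not produce $[e]\,[\evo_{X_j}]_{\rightleftharpoons_{X_i}}^*\,[s']$. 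The factorisation is valid only when every visited state is an $X_i$-equilibrium, and establishing that is exactly where the hypotheses of the lemma have to be spent — it is not the lifting but the projection that is the delicate point (the paper glosses it as ``an evolution by $X_i\cup X_j$ from an attractor of $X_i$ remains in the same attractor except potentially with evolutions by $X_j$'').

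In your ($\subseteq$) direction the hole is repairable with material you already have on the page: every state on a path between global equilibria lies in $E$, and $E=\LsetFun{X_i}(E)$ (Proposition~\ref{prop:leadsto}.\ref{llset1}) makes each of them an $X_i$-equilibrium, so $\evo_{X_i}$-steps along the return path cannot leave their class; but this must be said, since it is the entire content of ``projecting the return path gives $[s']\,[\evo_{X_j}]_{\rightleftharpoons_{X_i}}^*\,[e]$''. In the ($\supseteq$) direction the omission is more serious: a forward path from $e\in[c]$ has no a priori reason to stay among $X_i$-equilibria, and the only way to guarantee it is an induction along the path using the first defining condition of $[\widetilde{\LsetFun{X_j}}]_{\rightleftharpoons_{X_i}}(L)$, namely $([c]\,[\evo_{X_j}]_{\rightleftharpoons_{X_i}}^*)\subseteq[L]_{\rightleftharpoons_{X_i}}$, combined with the remark that any class meeting $L$ consists entirely of $X_i$-equilibria and is contained in $L$ (because $L=\LsetFun{X_i}(O)$ is exactly the set of $X_i$-equilibria of the $\evo_{X_i\cup X_j}$-closed set $O$). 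You never invoke that closure condition, so the assertion ``projecting this path to the quotient yields $[e]\,[\evo_{X_j}]_{\rightleftharpoons_{X_i}}^*\,[s']$'' is unsupported precisely where the lemma's content lies. With the projection lemma stated and proved in these two forms (via $E=\LsetFun{X_i}(E)$ on one side, via the closure condition on the other), the rest of your bookkeeping goes through and matches the paper's argument.
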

\begin{proof}
  Let $X_i, X_j \subseteq A$ such that $X_i \leadsto X_j$ and $S'$ a subset of
  $S$.\\
  \textbf{($\subseteq$)} First, let us show $\LsetFun{X_i \cup X_j} (S')
  \subseteq (\LsetFun{X_i} \oslash \LsetFun{X_j}) \circ \orbitFun{X_i \cup
    X_j}(S')$. From Proposition~\ref{prop:leadsto}.\ref{llset1}, we know that
  $\LsetFun{X_i} \circ \Lset{X_i \cup X_j}{S'} = \Lset{X_i \cup X_j}{S'}$. Thus,
  $\forall s \in \LsetFun{X_i \cup X_j}(S')$, we have
  $[s]_{\rightleftharpoons_{X_i}} \subseteq \LsetFun{X_i \cup
    X_j}(S')$. Similarly, an evolution by $X_i \cup X_j$ from an attractor of
  $X_i$ remain in the same attractor except potentially with evolutions by
  $X_j$. We have then, for all $s,s' \in \LsetFun{Xi \cup X_j}(S')$:
  \begin{multline*}
    [s]_{\rightleftharpoons_{X_i}} [\evo_{X_i \cup
      X_j}]_{\rightleftharpoons_{X_i}}^* [s']_{\rightleftharpoons_{X_i}} \iff \\
    ([s]_{\rightleftharpoons_{X_i}} [\evo_{X_j}]_{\rightleftharpoons_{X_i}}^*
    [s']_{\rightleftharpoons_{X_i}})\ \lor\ ([s]_{\rightleftharpoons_{X_i}} =
    [s']_{\rightleftharpoons_{X_i}})\text{.}
  \end{multline*}
  \noindent Now, since both $s$ and $s'$ belong to attractors of $X_i \cup X_j$
  (by hypothesis), if there exists an evolution by $X_j$ from
  $[s]_{\rightleftharpoons_{X_i}}$ to $[s']_{\rightleftharpoons_{X_i}}$, there
  exists obviously another path labelled by $X_j$ from
  $[s']_{\rightleftharpoons_{X_i}}$ to $[s]_{\rightleftharpoons_{X_i}}$. Hence,
  for all $s,s' \in \LsetFun{X_i \cup X_j}(S')$, we have:
  \begin{multline*}
    ([s]_{\rightleftharpoons_{X_i}} [\evo_{X_j}]_{\rightleftharpoons_{X_i}}^*
    [s']_{\rightleftharpoons_{X_i}}) = ([s]_{\rightleftharpoons_{X_i}} [\evo_{X_i
      \cup X_j}]_{\rightleftharpoons_{X_i}}^* [s']_{\rightleftharpoons_{X_i}})\ \implies\\
    ([s']_{\rightleftharpoons_{X_i}} [\evo_{X_i \cup X_j}]_{\rightleftharpoons_{X_i}}^* 
    [s]_{\rightleftharpoons_{X_i}}) = ([s']_{\rightleftharpoons_{X_i}} 
    [\evo_{X_j}]_{\rightleftharpoons_{X_i}}^* [s]_{\rightleftharpoons_{X_i}})\text{.}
  \end{multline*}
  \noindent As a result, we have $[s]_{\rightleftharpoons_{X_i}} \in
  [\widetilde{\LsetFun{X_j}}]_{\rightleftharpoons_{X_i}} ([\Lset{X_i \cup
    X_j}{S'}]_{\rightleftharpoons_{X_i}})$, for all $s \in \Lset{X_i \cup
    X_j}{S'}$. Moreover, since from Proposition~\ref{prop:eq}, operator
  $[\widetilde{\LsetFun{X_j}}]_{\rightleftharpoons_{X_i}}(S')$ is
  monotonous 
  and since $\Lset{X_i \cup X_j}{S'} \subseteq \orbit{X_i \cup X_j}{S'}$, for
  all $s \in \Lset{X \cup Y}{S'}$, we can write that
  $[s]_{\rightleftharpoons_{X_i}} \in
  [\widetilde{\LsetFun{X_j}}]_{\rightleftharpoons_{X_i}} (\orbit{X_i \cup
    X_j}{S'})$. Now, since $s \in [s]_{\rightleftharpoons_{X_i}}$, $\forall s
  \in \Lset{X_i \cup X_j}{S'}$, we have $s\in \textsf{Flatten} \circ
  [\widetilde{\LsetFun{X_j}}]_{\rightleftharpoons_{X_i}}(\orbit{X_i \cup
    X_j}{S'})$.
  \noindent From (\ref{eq:op}) and Proposition~\ref{prop:leadsto}.\ref{llset1},
  we can write:
  \begin{equation*}
    \begin{split}
      \LsetFun{X_i} \oslash \LsetFun{X_j} \circ \orbitFun{X_i \cup X_j}(S') & =
      \textsf{Flatten} \circ
      [\widetilde{\LsetFun{X_j}}]_{\rightleftharpoons_{X_i}} \circ \LsetFun{X_i}
      \circ \orbitFun{X_i \cup X_j}(S')\\
      & = \textsf{Flatten} \circ
      [\widetilde{\LsetFun{X_j}}]_{\rightleftharpoons_{X_i}} \circ \orbitFun{X_i
        \cup X_j}(S')\text{.}
    \end{split}
  \end{equation*}
  Hence, for all $s \in \Lset{X \cup Y}{S'}$, we have: $s \in \LsetFun{X}
  \oslash \LsetFun{Y} \circ \orbitFun{X \cup Y}(S')$, which corresponds to the
  following inclusion:
  \begin{equation*}
    \LsetFun{X \cup Y}(S') \subseteq \LsetFun{X} \oslash \LsetFun{Y} \circ 
    \orbitFun{X \cup Y}(S')\text{.}
  \end{equation*}
  \noindent \textbf{($\supseteq$)} Now, let us show $(\LsetFun{X_i} \oslash
  \LsetFun{X_j}) \circ \orbit{X_i \cup X_j}{S'} \subseteq \Lset{X_i \cup
    X_j}{S'}$. To do so, let us consider a state $s \in (\LsetFun{X_i} \oslash
  \LsetFun{X_j}) \circ \orbitFun{X_i \cup X_j}(S')$. From (\ref{eq:op}) and
  Proposition~\ref{prop:leadsto}.\ref{llset1}, we have
  $[s]_{\rightleftharpoons_{X_i}} \in [\LsetFun{X_j}]_{\rightleftharpoons_{X_i}}
  \circ [\LsetFun{X_i} \circ \orbitFun{X_i \cup
    X_j}(S')]_{\rightleftharpoons_{X_i}}$. Now, consider
  $[[s]_{\rightleftharpoons_{X_i}}]_{\rightleftharpoons_{X_j}}$. By definition
  of attractors, for all $s_1, s_2 \in \textsf{Flatten} \circ
  \textsf{Flatten}([[s]_{\rightleftharpoons_{X_i}}]_{\rightleftharpoons_{X_j}})$,
  we have $s_1 \evo^*_{X_i \cup X_j} s_2$. This means that
  $[[s]_{\rightleftharpoons_{X_i}}]_{\rightleftharpoons_{X_j}} =
  [s]_{\rightleftharpoons_{X_i\cup X_j}}$ and, as a consequence, that $s\in
  \LsetFun{X \cup Y} (S')$. As a result, the inclusion $(\LsetFun{X} \oslash
  \LsetFun{Y}) \circ \orbitFun{X \cup Y}(S') \subseteq \LsetFun{X \cup Y} (S')$
  holds.\cqfd
\end{proof}
As a main result, Theorem~\ref{the:gamma} shows how $\oslash$ can be used on a
modular organisation $\pi$ to obtain a modular computation of global equilibria
of an interaction network.
\begin{theorem}
  \label{the:gamma}
  Let $A' = \bigcup_{i=1}^m X_i\subseteq A$ be a set of agents such that
  $(X_1, \ldots,$  $X_m)$ is a modular organisation, we have:
  \begin{equation*}
    \LsetFun{A'} = (\LsetFun{X_1} \oslash \ldots \oslash \LsetFun{X_m})
    \circ \orbitFun{A'}\text{.}
  \end{equation*}
\end{theorem}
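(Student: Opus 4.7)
The plan is to prove the statement by induction on $m$, using Proposition~\ref{prop:modular-organisation} to fold the first $m-1$ modules together and then invoking Lemma~\ref{lem:ccmod} to peel off the last module.

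For the base case $m=1$, the result reduces to $\LsetFun{X_1} = \LsetFun{X_1} \circ \orbitFun{X_1}$, which follows directly from the definition of $\LsetFun{X_1}(S')$ as the equilibria lying in $\orbit{X_1}{S'}$ combined with the fact that $\orbit{X_1}{\orbit{X_1}{S'}} = \orbit{X_1}{S'}$ by transitivity of $\evo_{X_1}^{\ast}$.

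For the inductive step, I set $Y = \bigcup_{i=1}^{m-1} X_i$. By Proposition~\ref{prop:modular-organisation}, folding the first $m-1$ contiguous modules yields the modular organisation $(Y, X_m)$, so in particular $Y \leadsto X_m$. Lemma~\ref{lem:ccmod} then gives, for every $S' \subseteq S$,
\begin{equation*}
  \LsetFun{A'}(S') \;=\; (\LsetFun{Y} \oslash \LsetFun{X_m}) \circ \orbitFun{A'}(S').
\end{equation*}
Since $(X_1, \ldots, X_{m-1})$ is itself a modular organisation (trivially, from Definition~\ref{def:modular-organisation}), the inductive hypothesis applied to $Y$ yields $\LsetFun{Y} = (\LsetFun{X_1} \oslash \ldots \oslash \LsetFun{X_{m-1}}) \circ \orbitFun{Y}$. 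Unfolding the definition of $\oslash$ as $\LsetFun{Y} \oslash \LsetFun{X_m} = \textsf{Flatten} \circ [\widetilde{\LsetFun{X_m}}]_{\rightleftharpoons_Y} \circ \LsetFun{Y}$ and substituting the inductive expression for $\LsetFun{Y}$, I obtain
\begin{equation*}
  \LsetFun{A'}(S') \;=\; \textsf{Flatten} \circ [\widetilde{\LsetFun{X_m}}]_{\rightleftharpoons_Y} \circ (\LsetFun{X_1} \oslash \ldots \oslash \LsetFun{X_{m-1}}) \circ \orbitFun{Y} \circ \orbitFun{A'}(S').
\end{equation*}

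The key simplification is $\orbitFun{Y} \circ \orbitFun{A'} = \orbitFun{A'}$: the inclusion $\orbit{Y}{\orbit{A'}{S'}} \subseteq \orbit{A'}{S'}$ holds because $Y \subseteq A'$ implies $\evo_Y^{\ast} \subseteq \evo_{A'}^{\ast}$, and the reverse inclusion follows from reflexivity of $\evo_Y^{\ast}$. Once this reduction is applied, the right-hand side matches exactly the $m$-fold composition $(\LsetFun{X_1} \oslash \ldots \oslash \LsetFun{X_m}) \circ \orbitFun{A'}(S')$ under the natural left-associative reading of the iterated operator as $\textsf{Flatten} \circ [\widetilde{\LsetFun{X_m}}]_{\rightleftharpoons_Y} \circ (\LsetFun{X_1} \oslash \ldots \oslash \LsetFun{X_{m-1}})$.

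The main obstacle I anticipate is formally justifying the associativity reading of the $m$-fold $\oslash$ operator, since $\oslash$ is only defined pairwise and its right-hand factor $\LsetFun{X_m}$ is quotiented against the equivalence $\rightleftharpoons_Y$ induced by the \emph{union} of the preceding modules rather than by the immediately preceding module alone. This is precisely what makes Proposition~\ref{prop:modular-organisation} indispensable: folding guarantees that $Y$ behaves as a single module with equivalence $\rightleftharpoons_Y$, so the pairwise definition of $\oslash$ applies verbatim at each inductive step.
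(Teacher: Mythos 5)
Your proof is correct and follows essentially the same route as the paper: the paper's proof is exactly your induction written as an iterative unfolding, peeling off $X_m$ via Proposition~\ref{prop:modular-organisation} and Lemma~\ref{lem:ccmod}, and silently using the absorption $\orbitFun{\bigcup_{i=1}^{m-1}X_i}\circ\orbitFun{A'}=\orbitFun{A'}$ that you justify explicitly. The typing/associativity of the iterated $\oslash$ is likewise handled there only implicitly, by reading the $m$-fold composition precisely as your left-associative folded form, so your treatment matches the paper's.
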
  
\begin{proof}
  This proof is made directly, using Definition~\ref{def:modular-organisation}
  and Lemma~\ref{lem:ccmod}. Since $\pi = (X_1, \ldots, X_m)$ is a modular
  organisation, it is folding preserving and $\bigcup_{i=1}^{m-1} X_i \leadsto
  X_{m}$. Then, using (\ref{eq:op}) and Lemma~\ref{lem:ccmod}, we have:
  \begin{gather*}
    \hspace*{-65mm}\LsetFun{A'} = (\LsetFun{\bigcup_{i=1}^{m-1} X_i} \oslash
    \LsetFun{X_m}) \circ \orbitFun{A'}\\
    \hspace*{6mm}\begin{split}
      & = \textsf{Flatten} \circ
      [\widetilde{\LsetFun{X_m}}]_{\rightleftharpoons_{\bigcup_{i=1}^{m-1}X_i}}
      \circ
      \LsetFun{\bigcup_{i=1}^{m-1} X_i} \circ \orbitFun{A'}\\
      & = \textsf{Flatten} \circ
      [\widetilde{\LsetFun{X_m}}]_{\rightleftharpoons_{\bigcup_{i=1}^{m-1} X_i}}
      \circ (\LsetFun{\bigcup_{i=1}^{m-2} X_i} \oslash \LsetFun{X_{m-1}}) \circ
      \orbitFun{\bigcup_{i=1}^{m-1} X_i} \circ \orbitFun{A'}\\
      & = \textsf{Flatten} \circ
      [\widetilde{\LsetFun{X_m}}]_{\rightleftharpoons_{\bigcup_{i=1}^{m-1} X_i}}
      \circ (\LsetFun{\bigcup_{i=1}^{m-2} X_i} \oslash \LsetFun{X_{m-1}}) \circ
      \orbitFun{A'}\\
      & = \textsf{Flatten} \circ
      [\widetilde{\LsetFun{X_m}}]_{\rightleftharpoons_{\bigcup_{i=1}^{m-1} X_i}}
      \circ
      [\widetilde{\LsetFun{X_{m-1}}}]_{\rightleftharpoons_{\bigcup_{i=1}^{m-2}
          X_i}} \circ \LsetFun{\bigcup_{i=1}^{m-2} X_i} \circ \orbitFun{A'}\\
      & = \ldots\\
      & = \textsf{Flatten} \circ
      [\widetilde{\LsetFun{X_m}}]_{\rightleftharpoons_{\bigcup_{i=1}^{m-1} X_i}}
      \circ \ldots \circ [\widetilde{\LsetFun{X_2}}]_{\rightleftharpoons_{X_1}}
      \circ \LsetFun{X_1} \circ \orbitFun{A'}\text{.}
    \end{split}
  \end{gather*}
  As a result, we obtain:
  \begin{equation*}
    \Lset{A'}{S'} = (\LsetFun{X_1} \oslash \ldots \oslash \LsetFun{X_m})
    \circ \orbit{A'}{S'}\text{,}
  \end{equation*}
  which is the expected result.\cqfd
\end{proof}

\section{Elementary modular organisation}
\label{sec:separation}

Informally, a module is \emph{elementary} if it is not separable, \ie if the
equilibria of each of its agents depend entirely on the equilibria of all the
others. For instance, consider negative circuits that lead to asymptotic
sustained oscillations~\cite{Thomas1981}. In such regulation patterns,
the equilibria of an agent cannot be encompassed into that of the others
because, in order to reach its own equilibria, each agent evolves from the
equilibria of all the others.\smallskip

\noindent In this context, a modular organisation provided by some topological
ordering of the SCC quotient graph (see Theorem~\ref{the:scc-folfing}) does
not always provide an elementary decomposition. So, it may be not satisfactory
since modules often support specific biological functions, revealing
\emph{elementary} modules appears important, not to say essential, to highlight
basic biological functional mechanisms.\smallskip
\begin{figure}[t!]
  \begin{center}
    \begin{minipage}{49.55mm}
      \begin{tabular}{c}
        $\eta = \left\{
          \begin{array}{l}
            \eta_{a_1}(s) = (s_{a_1} \land s_{a_2}) \lor \neg s_{a_3}\\
            \eta_{a_2}(s) = s_{a_1} \land s_{a_2} \land s_{a_3}\\
            \eta_{a_3}(s) = s_{a_1}\\
          \end{array}
        \right.$\\[8mm]
        \includegraphics[scale=1]{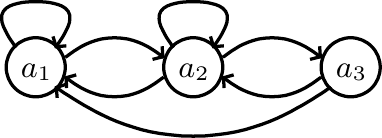}
      \end{tabular}
    \end{minipage}
    \hspace*{6mm}\begin{minipage}{55mm}
      \vspace*{0mm}\begin{tikzpicture}[xscale=1.3,yscale=.7]
        \scriptsize 
        \GraphInit[vstyle=Normal] 
        \SetVertexNormal[Shape = rectangle, LineWidth=0.5pt]
        \node[rectangle, draw, color=white, fill=black!65] at (0,.75) (000) 
        {$\scriptstyle 000$};
        \node[rectangle, draw, color=white, fill=black!65] at (3.75,.75) (001) 
        {$\scriptstyle 001$};
        \node[rectangle, draw] at (0,3.5) (010) {$\scriptstyle 010$};
        \node[rectangle, draw] at (3.75,3.5) (011) {$\scriptstyle 011$};
        \node[rectangle, draw, color=white, fill=black!65] at (1.1,1.4) (100) 
        {$\scriptstyle 100$};
        \node[rectangle, draw] at (1.1,2.8) (110) {$\scriptstyle 110$};
        \node[rectangle, draw, color=white, fill=black!65] at (2.65,1.4) (101) 
        {$\scriptstyle 101$};
        \node[rectangle, draw, fill = black!20] at (2.65,2.8) (111) 
        {$\scriptstyle 111$};
        \tikzset{EdgeStyle/.style = {post}}
        \tikzset{LabelStyle/.style = {fill=white,sloped}}
        \tikzset{LabelStyle/.style = {color=white, text=black}}
        \Edge[label={$a_1$}](000)(100)
        \Edge[label={$a_1$}](010)(110)
        \Edge[label={$a_1$}](101)(001)
        \Edge[label={$a_2$}](010)(000)
        \Edge[label={$a_2$}](011)(001)
        \Edge[label={$a_2$}](110)(100)
        \Edge[label={$a_3$}](001)(000)
        \Edge[label={$a_3$}](011)(010)
        \Edge[label={$a_3$}](100)(101)
        \Edge[label={$a_3$}](110)(111)
      \end{tikzpicture}
    \end{minipage}
  \end{center}
  \vspace*{-5mm}\caption{A separable interaction network with modular
    organisation $(\{b\}, \{a, c\})$.}
  \label{fig:exampled}
\end{figure}

\noindent Figure~\ref{fig:exampled} depicts an interaction network $\eta$
composed of three agents $a_1$, $a_2$ and $a_3$ with the associated strongly
connected regulation graph. Its underlying state graph shows that the global
dynamics of $\eta$ leads to two attractors, stable state $\{111\}$ and limit set
$\{000, 100, 101, 001\}$. It is easy to see that $\{a_2\} \leadsto \{a_1,a_3\}$
and that this $M$-relation is (obviously) preserved by folding, because there
are only two modules. Hence, ordered partition $(\{a_2\}, \{a_1, a_3\})$ is a
modular organisation of $\eta$.\smallskip

\noindent However, separability is not possible in general as illustrated in
Figure~\ref{fig:example-not-sep}. Indeed, starting from modular organisation
$\pi = (\{a_1\},\{a_2, a_3\})$ obtained from the SCCs, the separation of $\{a_2,
a_3\}$ should lead to the ordered partition $\pi'=(\{a_1\},\{a_2\},\{a_3\})$.
The condition for $\pi'$ to be a modular organisation is that $\{a_1, a_2\}
\leadsto \{a_3\}$, \ie the evolution by $a_3$ from the equilibria of $\{a_1,
a_2\}$ have to be included in the equilibria of $\{a_1, a_2\}$. We can observe
that the attractors for $\{a_1, a_2\}$ are $\{010,011\}$ and $\{100,101\}$,
while the evolution by $a_3$ from either $011$ or $101$ leaves the attractors of
$\{a_1, a_2\}$, which means that $\pi'$ is not a modular organisation. As a
consequence, $\{a_2, a_3\}$ cannot be separated.\smallskip

\noindent Hence, the separation condition of a module $X_i$ in $\pi$ is not
local to this module but depends on the module ``context'', that is the global
equilibria (\ie $\LsetFun{\bigcup_{k=1}^{i-1} X_k}$) of the modules that precede
it in $\pi$. Deciding the separability of $X_i$ into $X_i^1$ and $X_i^2$ implies
checking two conditions: $\bigcup_{k=1}^{i-1} X_k \leadsto X_i^1$ and
$\bigcup_{k=1}^{i-1} X_k \cup X_i^1 \leadsto X_i^2$. Of course, the complexity
of the underlying computation is exponential in the size of $\pi$ and depends
also on the position of $X_i$ in $\pi$. Nevertheless, brute-force computation
may be used in practice for small interaction networks (of about $15$ agents). A
more efficient method allowing to go beyond this limitation is, for the moment,
an open question.
\begin{figure}[t!]
  \begin{center}
    \begin{tabular}{c}
      $\eta = \left\{
        \begin{array}{l}
          \eta_{a_1}(s) = \neg s_{a_1}\\
          \eta_{a_2}(s) = (s_{a_1} \land s_{a_2}) \lor (s_{a_1} \land \neg s_{a_3})
          \lor (s_{a_2} \land \neg s_{a_3})\\
          \eta_{a_3}(s) = (s_{a_1} \land s_{a_2}) \lor (\neg s_{a_1} \land s_{a_3})
          \lor (s_{a_2} \land s_{a_3})\\
        \end{array}
      \right.$
    \end{tabular}\vspace*{3mm}

    \begin{minipage}{108mm}
      \centerline{
        \begin{tabular}{m{4cm}m{0.5cm}m{5.6cm}}
          \includegraphics[scale=1]{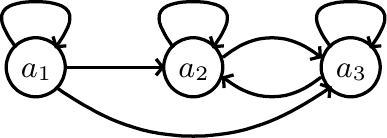} & ~ &
          \begin{tikzpicture}[xscale=1.3,yscale=.7]
            \scriptsize 
            \GraphInit[vstyle=Normal] 
            \SetVertexNormal[Shape = rectangle, LineWidth=0.5pt]
            \node[rectangle, draw] at (0,.75) (000) {$\scriptstyle 000$};
            \node[rectangle, draw] at (3.75,.75) (100) {$\scriptstyle 100$};
            \node[rectangle, draw, color=white, fill=black!65] at (0,3.5) (001) 
            {$\scriptstyle 001$};
            \node[rectangle, draw, color=white, fill=black!65] at (3.75,3.5) 
            (101)
            {$\scriptstyle 101$};
            \node[rectangle, draw, color=white, fill=black!65] at (1.1,1.4) 
            (010) 
            {$\scriptstyle 010$};
            \node[rectangle, draw, color=white, fill=black!65] at (2.65,1.4)
            (110)
            {$\scriptstyle 110$};
            \node[rectangle, draw] at (1.1,2.8) (011) {$\scriptstyle 011$};
            \node[rectangle, draw] at (2.65,2.8) (111) 
            {$\scriptstyle 111$};
            \tikzset{EdgeStyle/.style = {post}}
            \tikzset{LabelStyle/.style = {fill=white,sloped}}
            \tikzset{LabelStyle/.style = {color=white, text=black}}
            \Edge[label={$a_1$}](000)(100)
            \Edge[label={$a_1$}](100)(000)
            \Edge[label={$a_1$}](001)(101)
            \Edge[label={$a_1$}](101)(001)        
            \Edge[label={$a_1$}](111)(011)
            \Edge[label={$a_1$}](011)(111)
            \Edge[label={$a_1$}](110)(010)
            \Edge[label={$a_1$}](010)(110)
            \Edge[label={$a_2$}](011)(001)
            \Edge[label={$a_2$}](100)(110)
            \Edge[label={$a_3$}](110)(111)
            \Edge[label={$a_3$}](101)(100)
          \end{tikzpicture}
        \end{tabular}
      }
    \end{minipage}
  \end{center}
  \vspace*{-2mm}\caption{Example of non-separability of $\{a_2, a_3\}$ in $\pi =
    (\{a_1\}, \{a_2, a_3\})$.}
  \label{fig:example-not-sep}
\end{figure}

\section{Conclusion}
\label{sec:discussion}

We developed a formal framework for the analysis of the modularity in
interaction networks assuming asymptotic dynamics of modules and enabling their
composition. We exhibited modularity conditions governing the composition of
modules and an efficient computation method such that the global equilibria of
interaction networks are obtained from the local ones. Moreover, we confirmed
that usual assumptions identifying modules to SCCs have a strong motivation
coming from theory. Of course, the next step should be finding ingenious
algorithms for searching elementary modular organisations. Then, since this work
provides a rigorous setting for studying other questions around modularity, it
would be of interest to exhibit elementary biological components, which would be
relevant, in particular, for synthetic biology~\cite{Purnick2009}. Also,
questions related to robustness and evolution could be tackled thanks to the
modular knowledge of interaction networks.

\section*{Acknowlegments}

This work was partially supported by the \emph{Agence nationale de la recherche}
and the \emph{R{\'e}seau national des syst{\`e}mes complexes} through the
respective projects Synbiotic (\textsc{anr} 2010 \textsc{blan} 0307\,01) and
M{\'e}t{\'e}ding (\textsc{rnsc} \textsc{ai}10/11-\textsc{l}03908).

\bibliographystyle{splncs} 
\bibliography{modularity}

\end{document}